\newtheorem{theorem}{Theorem}
\newtheorem{lemma}[theorem]{Lemma}
\numberwithin{theorem}{section}
\theoremstyle{definition}
\theoremstyle{remark}
\newtheorem{remark}[theorem]{Remark}
\newcommand{\mR}{\mathbb{R}}
\newcommand{\mS}{\mathbb{S}}
\newcommand{\cP}{\mathcal{P}}
\newcommand{\cH}{\mathcal{H}}
\newcommand{\cL}{\mathcal{L}}
\newcommand{\cR}{\mathcal{R}}
\newcommand{\cU}{\mathcal{U}}
\newcommand{\su}{\mathfrak{su}(1,1)}
\newcommand*\pFqskip{8mu}
\newcommand*\pFq{\begingroup
        \catcode`\,\active
        \def ,{\mskip\pFqskip\relax}%
        \dopFq
}
\def\dopFq#1#2#3#4#5{%
        {}_{#1}F_{#2}\biggl[\genfrac..{0pt}{}{#3}{#4};#5\biggr]%
        \endgroup
}
\begin{document}
 
\title
{Bargmann and Barut-Girardello models for the Racah algebra}
\author[H. De Bie]{Hendrik De Bie}
\address{Department of Mathematical Analysis, Faculty of Engineering and Architecture, Ghent University, Krijgslaan 281, 9000 Gent, Belgium.}
\email{Hendrik.DeBie@UGent.be}

\author[P. Iliev]{Plamen Iliev}
\address{School of Mathematics, Georgia Institute of Technology, Atlanta, GA 30332-0160, USA.}
\email{iliev@math.gatech.edu}

\author[L. Vinet]{Luc Vinet}
\address{Centre de Recherches Math\'ematiques, Universit\'e de Montr\'eal, P.O. Box 6128, Centre-ville Station, Montr\'eal, QC H3C 3J7, Canada}
\email{vinet@crm.umontreal.ca}

\date{\today}
\keywords{superintegrable models, Racah algebra, Barut-Girardello realization, Bargmann realization}
\subjclass[2010]{33C55, 33C80, 81R10, 81R12} 

\begin{abstract}

The Racah algebra and its higher rank extension are the algebras underlying the univariate and multivariate Racah polynomials. In this paper we develop two new models in which the Racah algebra naturally arises as symmetry algebra, namely the Bargmann model and the Barut-Girardello model. We show how both models are connected with the superintegrable model of Miller et al.
The Bargmann model moreover leads to a new realization of the Racah algebra of rank $n$ as $n$-variable differential operators. Our conceptual approach also allows us to rederive the basis functions of the superintegrable model without resorting to separation of variables.
\end{abstract}

\maketitle

%\tableofcontents

\section{Introduction}
\setcounter{equation}{0}

\subsection{Racah algebra and Racah polynomials}
The Racah algebra was first introduced in \cite{Granovskii&Zhedanov-1988}. It is the infinite-dimensional associative algebra over $\mathbb{C}$ with generators $K_1$ and $K_2$ that satisfy, together with their commutator $K_3=\left[K_1,K_2\right]$, the relations:
\begin{align}\label{eq:Racah}
\begin{split}
\left[K_2,K_3\right]&=K_2^2+\left\{K_1,K_2\right\}+dK_2+e_1, \\
\left[K_3,K_1\right]&=K_1^2+\left\{K_1,K_2\right\}+dK_1+e_2,
\end{split}
\end{align}
with $\left\{A,B\right\}:=AB+BA$ and where $d$, $e_1$, $e_2$ are structure constants. It is crucially related with the Racah polynomials, which sit at the top of the discrete branch of the celebrated Askey scheme \cite{Koek}. Indeed, a suitable realization of the Racah algebra in terms of difference operators encodes the bispectral properties of these polynomials.

The Racah algebra can alternatively be presented in terms of generators  $C_{12}$, $C_{23}$, $C_{13}$ and $F$. In this form one has $C_{12}+C_{23}+C_{13}=G$ and $\left[C_{23},C_{13}\right]=\left[C_{13},C_{12}\right]=\left[C_{12},C_{23}\right]=2F$, where $G$ is a central operator. The relations then are 
\begin{align} \label{equit}
\begin{split}
\left[C_{12},F\right]&=C_{23}C_{12}-C_{12}C_{13}+i_{12},\\
\left[C_{23},F\right]&=C_{13}C_{23}-C_{23}C_{12}+i_{23}, \\
\left[C_{13},F\right]&=C_{12}C_{13}-C_{13}C_{23}+i_{13},
\end{split}
\end{align}
with $i_{23}, i_{13}, i_{12}$ central elements. The formulas (\ref{equit}) were first obtained in \cite{Gao&Wang&Hou-2013}. It is precisely this presentation we will use in this paper.
 
Multivariate extensions of the Racah polynomials were introduced by Tratnik in \cite{Trat}. These polynomials are multispectral, as obtained through an elaborate construction in \cite{geronimo}. 

Also the Racah algebra can be generalized to higher rank. This was first achieved in \cite{I} in the framework of superintegrability. Indeed, it appears naturally by considering the Hamiltonian 
\begin{equation} \label{eq:Hamiltonian1}
 H:=\sum_{1\leq i < j \leq n+1} (y_j \partial_{y_i} - y_i \partial_{y_j})^2
 +\sum_{i=1}^{n+1}\frac{b_i}{y_i^2},
\end{equation}
which is maximally superintegrable, and was introduced by Miller et al. \cite{KMP1, KMP2, K1, MPW, MT}. The symmetry algebra of this Hamiltonian is precisely the higher rank Racah algebra \cite{I} and it acts irreducibly on the eigenspaces of the Hamiltonian \cite{I2}. 
In \cite{racah} the higher rank Racah algebra is constructed in an alternative way, namely within the $n$-fold tensor product of $\cU(\su)$. The construction goes as follows. Consider the Casimir $C$ of $\su$. Using the coproduct, $C$ can be lifted to an arbitrary number of factors in the tensor product. This defines  intermediate Casimirs $C_A$ for any subset $A \subset [n]=\{1, \ldots, n\}$, where $A$ describes the relevant factors in the tensor product. The algebra generated by $\{ C_A\}$ is then called the Racah algebra of rank $n-2$. In the present paper we will show that both approaches coincide.

Recently, it was shown that the higher rank Racah algebra also has a discrete realization acting directly on the multivariate Racah polynomials as difference operators, see \cite{wouter}. This action extends the diagonal action already obtained in \cite{geronimo} and completely describes the connection between the Racah polynomials and their algebra.

Finally note that the Racah algebra can also be considered from the point of view of Howe duality, see \cite{Howe, Howe2}.

\subsection{Two new models}
The operator $\sum_{j=1}^n \partial_j$, while deceptively simple, has a rather rich history of investigations. First, it has been investigated in the case where it acts on polynomials $f(x_1, x_2, \ldots, x_n)$ with $x_j$ either $0$ or $1$. This was initially introduced by Dunkl in \cite{Du1, Du2} and simultaneously by Delsarte in \cite{Del}.
For a nice overview from the point of view of harmonic analysis, see \cite{S}. More recently, an orthogonal basis for null-solutions in this context was obtained in \cite{Film}.

Second, it has also been investigated in the case where it acts on complex-valued functions over $\mR^n$, which is also the topic of the present paper. The two relevant references are \cite{H1} and \cite{H2}, where Rosengren uses this operator to extend the theory of Hankel forms to the multilinear case. In doing so, he develops partly the harmonic analysis for this operator (such as projection operators). Nevertheless, he does not look into the problem of the symmetries of this operator.

In the present paper, we show that the operator  $\sum_{j=1}^n \partial_j$ arises as the $n$-fold tensor product of $\su$ by considering the so-called Bargmann realization. This means that its symmetries give rise to a new realization of the higher rank Racah algebra. We construct a basis of null-solutions that explicitly diagonalizes an abelian subalgebra of the Racah algebra. Moreover, by suitable acting on this basis we are able to construct a realization of the rank $n$ Racah algebra in precisely $n$ variables.

We also consider an alternative realization of $\su$, called the Barut-Girardello realization \cite{BG}. This gives rise to a second new model using the differential operator $\sum_{j=1}^n \left(x_j\partial_{x_{j}}^2 + 2 \nu_j \partial_{x_j} \right)$. Here the $\nu_j$ are constants. Surprisingly, both models turn out to be isomorphic. This is achieved by using a modified Laplace transform, first introduced in \cite{BVM}.
Finally, we show that the Barut-Girardello model can be identified with the Miller model. This identification shows precisely how an $n$-fold tensor product of $\su$ is underlying the Miller model. Moreover, it yields a new way of obtaining its basis functions (see e.g. \cite{I, I2}), using Fischer decomposition and Cauchy-Kovalevskaia extension rather than separation of variables.

\subsection{Contributions and organization}
In conclusion, our main contributions in this paper are the following:
\begin{itemize}
\item A realization of the rank $n$ Racah algebra in precisely $n$ variables.
\item The isomorphism between the Bargmann, Barut-Girardello and Miller models.
\item The conceptual approach to find the basis functions used in \cite{I, I2}.
\end{itemize}

The paper is organized as follows. In Section 2 we recall the general construction of the Racah algebra. In Section 3 we develop the Bargmann realization of the Racah algebra. In Section 4 we introduce the Barut-Girardello model and show how it is isomorphic with the Bargmann model through a modified Laplace transform. Finally, in Section 5 we show that the Barut-Girardello model can be identified with the superintegrable model of Miller et al.

\section{Preliminaries}
\setcounter{equation}{0}
\label{prelims}

The Lie algebra $\mathfrak{su}(1,1)$ is generated by $J_\pm$ and $A_0$ with relations:
\begin{equation}\label{su11}
[J_-,J_+]=2A_0, \qquad [A_0, J_\pm]=\pm J_\pm.
\end{equation}
Its universal enveloping algebra $\mathcal{U}(\mathfrak{su}(1,1))$ contains the Casimir element of $\mathfrak{su}(1,1)$:
\begin{equation}\label{Casimir}
	C:=A_0^2-A_0-J_+J_-.
\end{equation}
The comultiplication $\mu^*$ maps $\mathfrak{su}(1,1)$ into $\mathfrak{su}(1,1)\otimes\mathfrak{su}(1,1)$ via:
\begin{align*}
 \mu^*(J_{\pm})=J_{\pm}\otimes 1 + 1 \otimes J_{\pm}, \\
  \mu^*(A_{0})=A_{0}\otimes 1 + 1 \otimes A_{0}.
\end{align*}
This map extends to $\mathcal{U}(\mathfrak{su}(1,1))$. 

We now define inductively
$$
\mathscr{C}_1:=C, \qquad  \mathscr{C}_n:=(\underbrace{1\otimes\ldots\otimes 1}_{n-2 \text{ times }} \otimes \mu^*)(\mathscr{C}_{n-1}).
$$
 Consider the map
$$
\tau_k: \bigotimes_{i=1}^{m-1} \mathcal{U}(\mathfrak{su}(1,1))\rightarrow \bigotimes_{i=1}^{m} \mathcal{U}(\mathfrak{su}(1,1)),
$$
acting on homogeneous tensor products as:
$$
\tau_k(t_1\otimes \ldots \otimes t_{m-1}):=t_1\otimes \ldots \otimes t_{k-1} \otimes 1 \otimes t_k \otimes \ldots \otimes t_{m-1} .
$$
and extend it by linearity. We then put
\begin{align}
\label{Casimir-Upper}
C_A:=\left(\prod_{k \in \left[ n\right] \backslash A}^{\longrightarrow} \tau_k \right)\left(\mathscr{C}_{|A|}\right),
\end{align}
with $A$ a subset of $[n]$. Here we use the notation $[n]:=\{1,\ldots, n \}$.

%\begin{definition}
%The Racah algebra of rank $n-2$ is the subalgebra of $\bigotimes_{i=1}^{n} \mathcal{U}(\mathfrak{su}(1,1))$ generated by the set $\{ C_A | A \subset [n] \}$. We denote this algebra by $R(n)$.
%\end{definition}

The subalgebra $\cR_n$ of $\bigotimes_{i=1}^{n} \mathcal{U}(\mathfrak{su}(1,1))$ generated by the set $\{ C_A | A \subset [n] \}$ is called the Racah algebra.
For more details we refer the reader to \cite{racah}. We summarize the facts we will need in the sequel.
\begin{lemma}\label{Commute}
 If either $A \subset B$ or $B \subset A$ or $A \cap B=\emptyset$ then $C_A$ and $C_B$ commute.
\end{lemma}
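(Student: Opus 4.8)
The plan is to reinterpret each $C_A$ as the Casimir of a single diagonally embedded copy of $\su$ supported on the tensor factors indexed by $A$, and then to exploit centrality together with the fact that operators supported on disjoint factors commute.

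First I would make the inductive definition explicit by passing to a closed form. For a subset $S\subset[n]$ write $J_\pm^S:=\sum_{i\in S}1\otimes\cdots\otimes J_\pm\otimes\cdots\otimes 1$ and $A_0^S:=\sum_{i\in S}1\otimes\cdots\otimes A_0\otimes\cdots\otimes 1$, with the relevant generator placed in the $i$-th slot. Since $\mu^*$ is an algebra homomorphism and the iterated coproduct of a generator is exactly a diagonal sum of this type, unwinding the definitions of $\mathscr{C}_{|A|}$ and of the insertion maps $\tau_k$ yields
\begin{equation*}
C_A=(A_0^A)^2-A_0^A-J_+^A J_-^A .
\end{equation*}
I would verify this by induction on $|A|$: the base case is \eqref{Casimir}, and the inductive step is the observation that applying $\mu^*$ to the last slot sends $A_0^S\mapsto A_0^{S'}$ and $J_\pm^S\mapsto J_\pm^{S'}$ where $S'$ is $S$ with one extra slot adjoined, after which the $\tau_k$ merely relabel the active slots to be those of $A$. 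In particular $\{J_\pm^A,A_0^A\}$ again satisfy the relations \eqref{su11}, and $C_A$ is precisely their Casimir, so it commutes with $J_\pm^A$ and $A_0^A$.

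With this closed form the three cases become transparent. If $A\cap B=\emptyset$, then $C_A$ is a polynomial in generators supported on the slots of $A$ while $C_B$ is a polynomial in generators supported on the disjoint slots of $B$; operators acting on different tensor factors commute, so $[C_A,C_B]=0$. For the nested case, by symmetry assume $A\subset B$. The key step is to show that $C_A$ commutes with each of the \emph{larger} generators $J_\pm^B$ and $A_0^B$. Decomposing $J_\pm^B=J_\pm^A+J_\pm^{B\setminus A}$ and $A_0^B=A_0^A+A_0^{B\setminus A}$, I note that $C_A$ commutes with $J_\pm^A,A_0^A$ by centrality (it is their Casimir) and commutes with $J_\pm^{B\setminus A},A_0^{B\setminus A}$ because $A$ and $B\setminus A$ index disjoint factors. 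Hence $C_A$ commutes with $J_\pm^B$ and $A_0^B$, and therefore with every polynomial in these elements, in particular with $C_B=(A_0^B)^2-A_0^B-J_+^B J_-^B$.

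The only genuinely delicate point is the direction of the argument in the nested case: one must commute $C_A$ past the generators of the \emph{bigger} set $B$, rather than trying to commute $C_B$ past the sub-generators $J_\pm^A$, since $C_B$ is not a priori known to commute with $J_\pm^A$ individually (it only annihilates the total $J_\pm^B$). The decomposition $J_\pm^B=J_\pm^A+J_\pm^{B\setminus A}$, combined with centrality on $A$ and disjoint-support commutativity on $B\setminus A$, is exactly what resolves the asymmetry, and the closed form for $C_A$ is what legitimately reduces the whole question to commutation with generators.
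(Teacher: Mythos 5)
Your proof is correct: the closed form $C_A=(A_0^A)^2-A_0^A-J_+^AJ_-^A$ follows exactly as you argue from $\mu^*$ being an algebra homomorphism, and the split of the nested case into centrality on $A$ plus disjoint-support commutativity on $B\setminus A$ is sound. The paper itself offers no proof of this lemma (it refers to \cite{racah}), and your argument is essentially the standard one given there, so there is nothing to flag beyond noting that your writeup is self-contained where the paper defers to a citation.
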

As a consequence $C_i$ and $C_{[n]}$ are central in $\cR_n$. Consider a chain of sets $A_1 \varsubsetneq A_2 \varsubsetneq \dots A_j$, each set strictly contained in the next one. The operators $C_{A_i}$ corresponding to these sets  generate an Abelian algebra. 

Now let $\mathcal{C}$ be a chain  with maximal length: $A_i \varsubsetneq A_{i+1}$ with $|A_{i+1}|=|A_{i}|+1$. The set of Casimirs belonging to this chain excluding the central Casimirs generate the Abelian algebra $\langle C_{B}|B \in \mathcal{C} \text { and } |B|\neq 1, n \rangle$. We call this a labelling Abelian algebra. The rank of a Racah algebra equals the dimension of a labelling Abelian algebra, so $\cR_n$ has rank $n-2$.

Finally, the main relations of the Racah algebra are given in the following: \begin{lemma}\label{Racah1}
Let $K$, $L$ and $M$ be three disjoint subsets of $[n]$. The algebra generated by the set
\[ \{ C_K,C_L,C_M,C_{K \cup L}, C_{K \cup M}, C_{ L \cup M}, C_{K \cup L \cup M}  \} \]
is isomorphic to the rank $1$ algebra $\cR_3$. Therefore, putting $C_{KL} =C_{K\cup L}$ and introducing the operator $F$ as:
\begin{align}\label{B}
2F:=[C_{KL},C_{LM}] = [C_{KM},C_{KL}] = [C_{LM},C_{KM}],
\end{align}
the following relations hold 
\begin{align} \label{RankoneRacah}
\begin{split}
[C_{KL},F]&=C_{LM}C_{KL}-C_{KL}C_{KM}+\left(C_L-C_K\right)\left(C_{M}-C_{KLM}\right), \\
[C_{LM},F]&=C_{KM}C_{LM}-C_{LM}C_{KL}+\left(C_M-C_L\right)\left(C_{K}-C_{KLM}\right),\\
[C_{KM},F]&=C_{KL}C_{KM}-C_{KM}C_{LM}+\left(C_K-C_M\right)\left(C_{L}-C_{KLM}\right).
\end{split}
\end{align}
\end{lemma}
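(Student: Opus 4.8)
The plan is to reduce the statement to the rank-one case by bundling the tensor factors inside each of the three disjoint blocks $K$, $L$, $M$ into a single copy of $\su$. For $A\subseteq[n]$ set $J_\pm^A:=\sum_{i\in A}J_\pm^{(i)}$ and $A_0^A:=\sum_{i\in A}A_0^{(i)}$, where a superscript $(i)$ denotes the generator sitting in the $i$-th tensor slot with identities elsewhere. Since $\mu^*$ is a coassociative algebra homomorphism, applying the iterated coproduct to the Casimir \eqref{Casimir} over the slots of $A$ yields exactly $(A_0^A)^2-A_0^A-J_+^A J_-^A$; inserting identities for the remaining slots via the maps $\tau_k$ shows that this operator is precisely the $C_A$ defined in \eqref{Casimir-Upper}. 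In particular each block triple $(J_+^A,J_-^A,A_0^A)$ satisfies the relations \eqref{su11} and has $C_A$ as its Casimir.

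Because $K$, $L$, $M$ are pairwise disjoint, the three block triples act on disjoint slots and therefore commute with one another. Hence sending the three generating copies of $\su$ in $\usu\otimes\usu\otimes\usu$ to the blocks $K$, $L$, $M$ respectively defines an algebra homomorphism $\phi$ from the triple tensor product into $\bigotimes_{i=1}^n\usu$. By construction $\phi$ carries the seven Casimirs $C_{\{1\}},C_{\{2\}},C_{\{3\}},C_{\{1,2\}},C_{\{1,3\}},C_{\{2,3\}},C_{\{1,2,3\}}$ of $\cR_3$ onto $C_K,C_L,C_M,C_{KL},C_{KM},C_{LM},C_{KLM}$, so the subalgebra in the statement is the image $\phi(\cR_3)$. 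Every identity valid among the generators of $\cR_3$ then transports verbatim; it remains to record those identities in the rank-one model and to argue that $\phi$ is injective on $\cR_3$.

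For the defining relations I would work directly in $\usu^{\otimes3}$, i.e. with three genuine factors. Lemma~\ref{Commute} already shows the singleton Casimirs and the total Casimir $C_{KLM}$ are central, so each cross term such as $(C_L-C_K)(C_M-C_{KLM})$ is a central element, matching the shape of \eqref{equit}. I would compute one commutator, say $[C_{KL},C_{LM}]$, by expanding with \eqref{su11}--\eqref{Casimir}, read off $2F$, and check that $[C_{KM},C_{KL}]$ and $[C_{LM},C_{KM}]$ produce the same operator to establish \eqref{B}; then computing $[C_{KL},F]$ and matching against \eqref{RankoneRacah} reduces to recognizing the residual central combination as exactly $(C_L-C_K)(C_M-C_{KLM})$. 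These computations are routine but somewhat lengthy, and alternatively may be quoted from the rank-one analysis in the references.

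The genuine obstacle is upgrading the surjection $\cR_3\twoheadrightarrow\phi(\cR_3)$ to the asserted isomorphism, i.e. proving $\phi$ injective on $\cR_3$. This is not automatic, since bundling could in principle impose extra relations among the block Casimirs. I would settle it by producing a realization---for instance the Bargmann or Barut--Girardello model constructed later in the paper---in which the central images $C_K,C_L,C_M,C_{KLM}$ carry independent generic eigenvalues, so that the defining relations of $\cR_3$ together with generic central values leave no room for a nonzero kernel. This representation-theoretic separation of parameters is, I expect, where the real work lies.
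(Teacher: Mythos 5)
Your overall strategy --- bundling the slots of each block into a single copy of $\su$ via the iterated coproduct, so that the seven Casimirs become the intermediate Casimirs of an embedded three-fold tensor product --- is exactly the route of the source the paper relies on: the paper itself gives no proof of this lemma, referring instead to \cite{racah}, and your reconstruction of that argument is essentially sound. The identification $C_A=(A_0^A)^2-A_0^A-J_+^AJ_-^A$, the mutual commutation of the three block triples, the transport of \eqref{B} and \eqref{RankoneRacah} through the homomorphism $\phi$, and the reduction of the commutator computations to the genuine rank-one case are all correct; note that the possible interleaving of the blocks $K,L,M$ inside $[n]$ causes no trouble because $\usu$ is cocommutative, which your symmetric formulas $J_\pm^A=\sum_{i\in A}J_\pm^{(i)}$ make manifest.

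The one genuine gap is your treatment of injectivity. You rightly flag that it must be proved, but you misjudge both its difficulty and the appropriate tool. Your plan --- exhibit a model in which the central elements take independent generic eigenvalues, so that the relations of $\cR_3$ ``leave no room'' for a kernel --- is not a proof and would be hard to complete: $\cR_3$ is defined as a concrete subalgebra of $\usu^{\otimes 3}$, not by generators and relations, so ruling out a kernel of $\phi$ requires a jointly faithful family of representations of $\cR_3$ of the pulled-back form, and central characters alone can never separate elements of a noncommutative algebra. The point you are missing is that no representation theory is needed: injectivity of $\phi$ on all of $\usu^{\otimes 3}$ is automatic from the Hopf-algebra counit. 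Each iterated coproduct $\Delta^{(m)}\colon \usu\to\usu^{\otimes m}$ is a split injection, since applying the counit $\varepsilon$ to all but one tensor slot recovers the identity map; a tensor product of injective linear maps of vector spaces over a field is again injective; and the insertion of identity factors into the slots outside $K\cup L\cup M$ is likewise split by counits. Hence $\phi$ is injective on the whole triple tensor product, and in particular restricts to an isomorphism of $\cR_3$ onto the subalgebra generated by the seven Casimirs, which is what the lemma asserts. With this replacement for your last step, your argument is complete.
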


\section{The Bargmann model}
\setcounter{equation}{0}

\subsection{Construction of the model}

Introduce, for $\nu >0$, the following operators
\begin{align*}
K_0 &= x \partial_x + \nu\\
K_- &= \partial_x\\
K_+ &= x^2 \partial_x + 2 \nu x.
\end{align*}
It is easy to verify that they satisfy the $\su$ relations:
\[
[K_0, K_\pm] = \pm K_\pm, \qquad [K_-,K_+] = 2 K_0.
\]

Consider now $n$ mutually commuting sets of such $\su$ generators and combine them by addition. This yields the following operators
\begin{align*}
K_-^{[n]} &= \sum_{j=1}^n\partial_{x_{j}}\\
K_+^{[n]} &= \sum_{j=1}^n(x_j^2 \partial_{x_{j}} + 2 \nu_j x_j)\\
K_0^{[n]} & = \sum_{j=1}^n x_j\partial_{x_{j}} + \sum_{j=1}^n\nu_j
\end{align*}
which again generate $\su$. 

We consider their action on $\cP(\mR^n)= \mR[x_1, \ldots, x_n]$. 
We define the space of harmonics
\[
\cH_k(\mR^n) = \cP_k(\mR^n) \cap \ker{K_-^{[n]}}
\]
where $\cP_k(\mR^n)$ is the space of homogeneous polynomials of degree $k$.

Then we have two fundamental results. First is the Fischer decomposition of homogeneous polynomials into harmonics:
\begin{theorem}
\label{Fischer}
The space $\cP_k(\mR^n)$ decomposes as
\[
\cP_k(\mR^n) = \bigoplus_{j=0}^k \left(K_+^{[n]}\right)^j \cH_{k-j}(\mR^n).
\]
\end{theorem}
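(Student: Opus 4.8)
The plan is to exploit the $\su$-module structure carried by $\cP(\mR^n)$ and to argue by induction on $k$, the inductive heart being the single-step splitting
\[
\cP_m(\mR^n) = \cH_m(\mR^n) \oplus K_+^{[n]}\,\cP_{m-1}(\mR^n).
\]
Iterating this identity and peeling off one factor of $K_+^{[n]}$ at a time reassembles the claimed decomposition, provided each power $(K_+^{[n]})^j$ stays injective on harmonics so that the sum remains direct. Thus the proof splits into a purely Lie-theoretic computation, a (routine) directness check, and a spanning argument, which is where the real work sits.

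The computation I would isolate first is representation-theoretic and uses nothing but the relations $[K_-^{[n]},K_+^{[n]}]=2K_0^{[n]}$ and $[K_0^{[n]},K_+^{[n]}]=K_+^{[n]}$. Observe that $K_0^{[n]}$ acts on $\cP_m(\mR^n)$ as the scalar $\lambda = m + \sum_i \nu_i$, that $K_-^{[n]}$ and $K_+^{[n]}$ lower and raise the degree by one, and that a harmonic $v \in \cH_{m}(\mR^n)$ is precisely a lowest weight vector, $K_-^{[n]} v = 0$. An easy induction on $j$ then yields
\[
K_-^{[n]}\bigl(K_+^{[n]}\bigr)^j v = j\,(2\lambda + j - 1)\,\bigl(K_+^{[n]}\bigr)^{j-1} v, \qquad v \in \cH_{m}(\mR^n),\ \lambda = m + \textstyle\sum_i \nu_i .
\]
Because each $\nu_i>0$ we have $\lambda \ge \sum_i\nu_i>0$, so every coefficient $j(2\lambda+j-1)$ with $j\ge 1$ is strictly positive; this nonvanishing is exactly what drives the whole argument.

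With this formula the directness is the easy half. If $\sum_{j=0}^k (K_+^{[n]})^j h_j = 0$ with $h_j \in \cH_{k-j}(\mR^n)$, I would apply $(K_-^{[n]})^k$: since $K_-^{[n]} h_j = 0$, repeated lowering sends every term with $j<k$ to zero, while the top term becomes a nonzero scalar multiple of $h_k$, forcing $h_k=0$; a downward recursion then kills all the $h_j$. Injectivity of each $(K_+^{[n]})^j$ on $\cH_{k-j}(\mR^n)$ follows the same way by applying $(K_-^{[n]})^j$ to a single term.

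The spanning half is the main obstacle, and here the induction and the coefficient formula combine. Given $p \in \cP_m(\mR^n)$, I want a correction $K_+^{[n]} q$ making $p - K_+^{[n]} q$ harmonic, i.e. I must solve $K_-^{[n]} K_+^{[n]} q = K_-^{[n]} p$ for some $q \in \cP_{m-1}(\mR^n)$. By the inductive decomposition of $\cP_{m-1}(\mR^n)$ the operator $K_-^{[n]}K_+^{[n]}$ preserves each summand $(K_+^{[n]})^j\cH_{m-1-j}(\mR^n)$ and, by the displayed formula, acts on it as the scalar $(j+1)\bigl(2(m-1-j+\sum_i\nu_i)+j\bigr)$, which is strictly positive for $0\le j\le m-1$ precisely because $\nu_i>0$. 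Hence $K_-^{[n]}K_+^{[n]}$ is invertible on $\cP_{m-1}(\mR^n)$ and the desired $q$ exists; setting $h := p - K_+^{[n]}q \in \cH_m(\mR^n)$ and expanding $q$ via the inductive hypothesis places $p$ in $\bigoplus_{j=0}^m (K_+^{[n]})^j \cH_{m-j}(\mR^n)$. The one point to watch throughout is that the positivity $\nu_i>0$ is genuinely used: it is what keeps all these scalars away from zero, and without it $K_-^{[n]}K_+^{[n]}$ could become singular and the decomposition break down.
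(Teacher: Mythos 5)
Your proof is correct: the lowering formula $K_-^{[n]}\bigl(K_+^{[n]}\bigr)^j v = j(2\lambda+j-1)\bigl(K_+^{[n]}\bigr)^{j-1}v$ for $v\in\cH_m(\mR^n)$, the strict positivity of these scalars when every $\nu_i>0$, and the induction showing that $K_-^{[n]}K_+^{[n]}$ acts invertibly on $\cP_{m-1}(\mR^n)$ together establish both the directness and the spanning, with no gaps. This is essentially the paper's approach: the paper does not write out the argument but declares the theorem ``an essential consequence of the $\su$ relations,'' deferring to Theorem 3 of \cite{sommen} and Section 8 of \cite{H2}, and your argument is precisely that standard $\su$-relations proof, so you have in effect supplied the details the paper leaves to its references.
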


\begin{proof}
This is standard and is an essential consequence of the $\su$ relations. One can easily e.g. adapt the proof of Theorem 3 in \cite{sommen} to work for the present situation.
The statement is also a consequence of Section 8 in \cite{H2}.
\end{proof}

Second we have the Cauchy-Kovalevskaia (CK) isomorphism that yields a simple description of all harmonics.
\begin{theorem}
\label{CK}
The space $\cH_k(\mR^n)$ is isomorphic with $\cP_k(\mR^{n-1})$. The isomorphism is given by the map
\begin{align*}
CK_{n} : \cP_k(\mR^{n-1}) &\rightarrow \cH_k(\mR^{n})\\
p(x_1, \ldots, x_{n-1})&\mapsto p(x_1-x_n, \ldots, x_{n-1}-x_n)
\end{align*}
and its inverse by putting $x_n =0$.
\end{theorem}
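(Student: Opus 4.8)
The plan is to check directly that $CK_n$ is a degree-preserving linear map into $\cH_k(\mR^n)$, that the restriction $h\mapsto h|_{x_n=0}$ inverts it, and so conclude it is an isomorphism. Linearity and injectivity are cheap: linearity is clear, and substituting $x_n=0$ into $p(x_1-x_n,\dots,x_{n-1}-x_n)$ returns $p$, so restriction is a left inverse and $CK_n$ is injective. Since each argument $x_i-x_n$ is homogeneous of degree one, $CK_n$ also preserves the homogeneous degree. The first substantive point is that the image is annihilated by $K_-^{[n]}=\sum_{j=1}^n\partial_{x_j}$: writing $q=CK_n(p)$ and $u_i=x_i-x_n$, the chain rule gives $\partial_{x_j}q=(\partial_j p)(u_1,\dots,u_{n-1})$ for $j<n$ and $\partial_{x_n}q=-\sum_{i=1}^{n-1}(\partial_i p)(u_1,\dots,u_{n-1})$, which sum to zero. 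Hence $CK_n$ indeed maps $\cP_k(\mR^{n-1})$ into $\cH_k(\mR^n)$.

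The real content is surjectivity, equivalently that restriction is also a right inverse. I would establish this by straightening the operator $K_-^{[n]}$ with the invertible linear change of coordinates $u_i=x_i-x_n$ ($1\le i\le n-1$), $u_n=x_n$. A chain-rule computation identical in flavour to the one above shows that this change transforms $K_-^{[n]}$ into the single derivative $\partial_{u_n}$. As it is linear and invertible, the change of variables is a degree-preserving bijection of $\cP_k(\mR^n)$, and a homogeneous polynomial lies in $\cH_k(\mR^n)$ precisely when, in the new coordinates, it is killed by $\partial_{u_n}$, i.e. when it is independent of $u_n$. Such polynomials are exactly those of the form $p(u_1,\dots,u_{n-1})=p(x_1-x_n,\dots,x_{n-1}-x_n)$, that is, exactly the image of $CK_n$. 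This gives surjectivity, and combined with injectivity shows $CK_n$ is an isomorphism with inverse $h\mapsto h|_{x_n=0}$.

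As a cross-check one can bypass the surjectivity argument by dimension counting: the Fischer decomposition of Theorem \ref{Fischer} gives $\dim\cH_k(\mR^n)=\dim\cP_k(\mR^n)-\dim\cP_{k-1}(\mR^n)=\binom{k+n-2}{n-2}=\dim\cP_k(\mR^{n-1})$, so an injective map between these equidimensional spaces must be bijective. The only genuinely nontrivial step is surjectivity, i.e. that a homogeneous null-solution of $\sum_j\partial_{x_j}$ vanishing on $\{x_n=0\}$ is identically zero; the change of variables reduces this to the triviality that a polynomial with vanishing $\partial_{u_n}$-derivative does not depend on $u_n$. The one point deserving care is confirming that the coordinate change is a degree-preserving automorphism of the polynomial ring, which holds simply because it is linear and invertible.
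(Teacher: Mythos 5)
Your proof is correct, but it takes a genuinely different route from the paper's. The paper \emph{derives} the map rather than verifying it: writing a putative harmonic as $\sum_{j} x_n^j\, p_j(x_1,\ldots,x_{n-1})$, the condition of being annihilated by $K_-^{[n]}$ forces $p_j = \frac{(-1)^j}{j!}\bigl(K_-^{[n-1]}\bigr)^j p_0$, so the harmonic extension of $p_0$ exists, is unique, and equals $\exp\bigl(-x_n K_-^{[n-1]}\bigr)p_0 = p_0(x_1-x_n,\ldots,x_{n-1}-x_n)$; bijectivity then follows because the recursion shows the restriction map is injective on harmonics. You instead take the translation formula as given, check by the chain rule that its image is harmonic, and obtain surjectivity by the linear change of coordinates $u_i = x_i - x_n$, $u_n = x_n$, which straightens $K_-^{[n]}$ into $\partial_{u_n}$. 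Your straightening argument is arguably the most transparent proof for this particular operator; what the paper's recursive/exponential-operator method buys is that it generalizes verbatim to the Barut--Girardello model (Theorem \ref{CKBG}), where $K_-^{[n-1]}$ is replaced by the second-order operator $L_-^{[n-1]}$, no linear change of variables can straighten it, and the same recursion produces the Bessel-type series $\Gamma(2\nu_n)\sum_{j}\frac{\bigl(-x_n L_-^{[n-1]}\bigr)^j}{j!\,\Gamma(j+2\nu_n)}$. One caveat on your dimension-count cross-check: the direct sum in Theorem \ref{Fischer} gives $\dim\cP_k(\mR^n) = \sum_{j}\dim\bigl(\bigl(K_+^{[n]}\bigr)^j\cH_{k-j}(\mR^n)\bigr)$, and to convert this into $\dim\cH_k(\mR^n)=\dim\cP_k(\mR^n)-\dim\cP_{k-1}(\mR^n)$ you also need that $\bigl(K_+^{[n]}\bigr)^j$ is injective on $\cH_{k-j}(\mR^n)$ --- true by standard $\su$ weight arguments, but not literally contained in the statement of Theorem \ref{Fischer}. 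Since this is only a cross-check, your main argument is unaffected.
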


\begin{proof}
Let us write
\[
CK_n \left( p (x_1, \ldots, x_{n-1})\right) = \sum_{j=0}^k x_n^j p_j(x_1, \ldots, x_{n-1})
\]
with $p_0 :=p$ and $p_j$, $j  \in \{1, \ldots, k\}$ to be determined. We express that $K_-^{[n]} CK_n \left( p \right) =0$. This yields
\begin{align*}
&K_-^{[n]} CK_n \left( p (x_1, \ldots, x_n)\right) =0\\
\Longleftrightarrow & \left( K_-^{[n-1]} + \partial_{x_n} \right) CK_n \left( p (x_1, \ldots, x_n)\right) =0\\
\Longleftrightarrow &x_n^k K_-^{[n-1]} p_k + \sum_{j=0}^{k-1} x^j_n \left( (j+1) p_{j+1} + K_-^{[n-1]} p_j \right)  =0.
\end{align*}
This yields, for all $j  \in \{1, \ldots, k\}$
\[
p_j = \frac{(-1)^j}{j !} \left( K_-^{[n-1]} \right)^j p.
\]
Hence we have
\begin{align*}
CK_n p &= \sum_{j=0}^k \frac{(- x_n)^j \left( K_-^{[n-1]} \right)^j}{j!} p\\
&= \exp(- x_n K_-^{[n-1]}) p\\
& = \prod_{j=1}^{n-1} \exp(-x_n \partial_{x_j}) p\\
&=p(x_1-x_n, \ldots, x_{n-1}-x_n)
\end{align*}
as $\exp(-x_n \partial_{x_j})$ is the translation operator.

Using these explicit formulas, it is now easily seen that $CK_n$ is indeed an isomorphism of vector spaces. Indeed, $CK_n$ is injective by construction, and so is its left inverse $R$ which is given by $R q(x_1, \ldots, x_n) = q(x_1, \ldots, x_{n-1}, 0)$.
\end{proof}

We can now obtain a basis for $\cH_k(\mR^{n})$, by combining Theorem \ref{CK} with Theorem \ref{Fischer}. We have the following result
\begin{theorem}
\label{basis}
A basis for $\cH_k(\mR^{n})$ is given by the set of polynomials
\[
\psi_{j_1, \ldots, j_{n-1}} =  CK_{n} \left(  \left(K_+^{[n-1]}\right)^{j_{n-1}} CK_{n-1} \left(  \ldots \left(K_+^{[3]}\right)^{j_3} CK_{3} \left( \left(K_+^{[2]}\right)^{j_2} CK_{2} \left( x_1^{j_1}   \right) \right) \right) \right)
\]
with $\sum_{\ell=1}^{n-1} j_{\ell} = k$.
\end{theorem}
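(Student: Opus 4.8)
The plan is to proceed by induction on $n$, using Theorem \ref{CK} to identify harmonics in $n$ variables with \emph{all} polynomials in $n-1$ variables, and Theorem \ref{Fischer} to peel off one factor $K_+^{[n-1]}$ at each stage. First I would dispose of the base case $n=2$: here Theorem \ref{CK} gives $\cH_k(\mR^2)\cong \cP_k(\mR^1)=\langle x_1^k\rangle$, so $\cH_k(\mR^2)$ is one-dimensional and spanned by $CK_2(x_1^k)$, which is precisely the asserted basis (the only admissible index being $j_1=k$).

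For the inductive step I assume the statement for $n-1$, i.e.\ that for every $k'$ a basis of $\cH_{k'}(\mR^{n-1})$ is given by the $\psi_{j_1,\dots,j_{n-2}}$ with $\sum_{\ell=1}^{n-2} j_\ell=k'$. Applying the Fischer decomposition (Theorem \ref{Fischer}) in $n-1$ variables gives
\[
\cP_k(\mR^{n-1}) = \bigoplus_{j_{n-1}=0}^k \left(K_+^{[n-1]}\right)^{j_{n-1}} \cH_{k-j_{n-1}}(\mR^{n-1}).
\]
By the induction hypothesis each summand $\cH_{k-j_{n-1}}(\mR^{n-1})$ carries the basis $\{\psi_{j_1,\dots,j_{n-2}} : \sum_{\ell=1}^{n-2} j_\ell = k-j_{n-1}\}$. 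Applying $(K_+^{[n-1]})^{j_{n-1}}$ to these vectors and collecting over all $j_{n-1}$ then produces, by directness of the decomposition, a basis of $\cP_k(\mR^{n-1})$ indexed by $(j_1,\dots,j_{n-1})$ with $\sum_{\ell=1}^{n-1} j_\ell = k$. Finally I would transport this set through the vector-space isomorphism $CK_n:\cP_k(\mR^{n-1})\to\cH_k(\mR^n)$ of Theorem \ref{CK}; since an isomorphism sends a basis to a basis, the images $CK_n\big((K_+^{[n-1]})^{j_{n-1}}\psi_{j_1,\dots,j_{n-2}}\big)=\psi_{j_1,\dots,j_{n-1}}$ form a basis of $\cH_k(\mR^n)$, which is exactly the claimed list.

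The only step requiring genuine care, and the one I expect to be the main obstacle, is checking that applying $(K_+^{[n-1]})^{j_{n-1}}$ to the inductive basis really yields \emph{linearly independent} vectors, i.e.\ that the powers of $K_+^{[n-1]}$ act injectively on each harmonic space (the direct-sum statement alone only guarantees the images are independent across different summands). I would settle this directly from the $\su$ relations: for a harmonic $h$, which is a lowest-weight vector with $K_-^{[n-1]}h=0$ and $K_0^{[n-1]}h=\mu h$, one computes by induction that
\[
K_-^{[n-1]} \left(K_+^{[n-1]}\right)^{j} h = j\,(2\mu + j - 1)\left(K_+^{[n-1]}\right)^{j-1} h .
\]
Since $\mu=(k-j_{n-1})+\sum_{i=1}^{n-1}\nu_i>0$, the scalar never vanishes, so a downward induction shows $(K_+^{[n-1]})^{j}h\neq 0$ whenever $h\neq 0$; hence these powers are injective on harmonics.

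As a cross-check, injectivity is also forced by a dimension count that simultaneously confirms the multi-indices are counted correctly: Theorem \ref{CK} gives $\dim\cH_{k-j}(\mR^{n-1})=\dim\cP_{k-j}(\mR^{n-2})=\binom{k-j+n-3}{n-3}$, and the hockey-stick identity $\sum_{j=0}^{k}\binom{k-j+n-3}{n-3}=\binom{k+n-2}{n-2}=\dim\cP_k(\mR^{n-1})$ shows that the Fischer summands must already exhaust the full dimension, so each $(K_+^{[n-1]})^{j}$ restricted to the corresponding harmonic space is injective. Either route closes the induction.
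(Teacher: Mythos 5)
Your proof is correct and follows exactly the route the paper intends: the paper offers no written proof beyond the remark that the basis is obtained ``by combining Theorem \ref{CK} with Theorem \ref{Fischer}'', which is precisely your induction (Fischer decomposition in $n-1$ variables, then transport through the isomorphism $CK_n$). Your extra care about injectivity of $(K_+^{[n-1]})^{j}$ on harmonics --- via the $\su$ commutation identity $K_-K_+^{j}h = j(2\mu+j-1)K_+^{j-1}h$ with $\mu>0$, or the dimension count --- fills in a detail the paper leaves implicit, and both of your arguments for it are sound.
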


This basis is special in the following sense. Consider a subset $B \subset [n]$ and introduce the operators
\begin{align*}
K_-^{B} &= \sum_{j \in B}\partial_{x_{j}}\\
K_+^{B} &= \sum_{j \in B}(x_j^2 \partial_{x_{j}} + 2 \nu_j x_j)\\
K_0^{B} & = \sum_{j \in B} x_j\partial_{x_{j}} + \sum_{j \in B}\nu_j
\end{align*}
which again yields an `intermediate' $\su$ realization. Its Casimir is given by
\begin{equation}
\label{casbarg}
C_B:=\left(K_0^B\right)^2 - K_0^B - K_+^B K_-^B.
\end{equation}
The collection of Casimir operators $C_B$, for all $B \subset [n]$, generate the algebra $\cR_n$, which is called the rank $(n-2)$ Racah algebra, see Section \ref{prelims}.
In particular, we consider the following labelling Abelian subalgebra:
\[
\langle C_{[2]}, C_{[3]}, C_{[4]}, \ldots, C_{[n-1]}, C_{[n]}  \rangle
\]
and show that it acts diagonally on the basis of Theorem \ref{basis}. Indeed, we have
\begin{theorem}
\label{quantumnrs}
One has, for $\ell \in \{2, \ldots, n\}$
\[
C_{[\ell]} \psi_{j_1, \ldots, j_{n-1}} = \lambda_{j_1, \ldots, j_{n-1}}^{[\ell]}\psi_{j_1, \ldots, j_{n-1}}
\]
with
\[
\lambda_{j_1, \ldots, j_{n-1}}^{[\ell]}= \left(\sum_{i=1}^{\ell-1} j_i + \sum_{i=1}^{\ell} \nu_i \right) \left(-1+\sum_{i=1}^{\ell-1} j_i + \sum_{i=1}^{\ell} \nu_i \right).
\]
\end{theorem}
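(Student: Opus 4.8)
The plan is to exploit the nested structure of $\psi_{j_1,\ldots,j_{n-1}}$ together with the centrality of the Casimir: I will pin down the eigenvalue of $C_{[\ell]}$ already at the intermediate stage where the variable $x_\ell$ first enters the construction, and then argue that none of the later operations can change it.

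Write $\beta_\ell$ for the polynomial produced at the moment $CK_\ell$ is applied in Theorem \ref{basis}, that is, the image under $CK_\ell$ of the polynomial assembled from the inner factors. By Theorem \ref{CK} this lies in $\cH_{d_{\ell-1}}(\mR^\ell)$ with $d_{\ell-1}=\sum_{i=1}^{\ell-1}j_i$; in particular $K_-^{[\ell]}\beta_\ell=0$, and since the CK substitution preserves the degree of homogeneity, $\beta_\ell$ is homogeneous of degree $d_{\ell-1}$ in $x_1,\ldots,x_\ell$. First I would evaluate $C_{[\ell]}$ on $\beta_\ell$: because $K_-^{[\ell]}\beta_\ell=0$, the term $K_+^{[\ell]}K_-^{[\ell]}$ in \eqref{casbarg} drops out, leaving $C_{[\ell]}\beta_\ell=\big((K_0^{[\ell]})^2-K_0^{[\ell]}\big)\beta_\ell$. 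Homogeneity gives $K_0^{[\ell]}\beta_\ell=\mu\,\beta_\ell$ with $\mu=d_{\ell-1}+\sum_{i=1}^{\ell}\nu_i$, whence $C_{[\ell]}\beta_\ell=\mu(\mu-1)\beta_\ell$, which is exactly $\lambda_{j_1,\ldots,j_{n-1}}^{[\ell]}$.

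The crux is then to show that $C_{[\ell]}$ commutes with every operator applied to $\beta_\ell$ on the way up to $\psi_{j_1,\ldots,j_{n-1}}$, namely the factors $(K_+^{[m]})^{j_m}$ with $m\geq\ell$ and the maps $CK_m$ with $m>\ell$. Two observations suffice. First, $C_{[\ell]}$ is the image of the central Casimir of $\su$ under the intermediate realization attached to the block $[\ell]$, so it commutes with $K_+^{[\ell]}$, $K_-^{[\ell]}$ and $K_0^{[\ell]}$. Second, $C_{[\ell]}$ involves only $x_1,\ldots,x_\ell$ and their derivatives, hence commutes with any operator built from $x_m,\partial_{x_m}$ for $m>\ell$. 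Combining the two: for $m\geq\ell$ one splits $K_+^{[m]}=K_+^{[\ell]}+\sum_{\ell<j\leq m}\big(x_j^2\partial_{x_j}+2\nu_j x_j\big)$ and both summands commute with $C_{[\ell]}$; and writing $CK_m=\exp\big(-x_m K_-^{[m-1]}\big)$ as in the proof of Theorem \ref{CK}, one checks $[C_{[\ell]},x_m K_-^{[m-1]}]=0$ (valid for $m>\ell$, so that $m-1\geq\ell$), whence $[C_{[\ell]},CK_m]=0$. Since $\psi_{j_1,\ldots,j_{n-1}}$ is obtained from $\beta_\ell$ by applying only such commuting operators, I conclude $C_{[\ell]}\psi_{j_1,\ldots,j_{n-1}}=\lambda_{j_1,\ldots,j_{n-1}}^{[\ell]}\psi_{j_1,\ldots,j_{n-1}}$.

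I expect the only delicate point to be the bookkeeping in the commutation step, in particular handling the CK maps through their exponential form rather than as abstract isomorphisms between different polynomial spaces; it is convenient to regard all operators as acting on $\cP(\mR^n)$, so that $CK_m$ becomes the translation operator $\exp(-x_m K_-^{[m-1]})$ on polynomials independent of $x_m$. The conceptual content—that the eigenvalue is locked in at the stage $\beta_\ell$ and is merely transported unchanged by the remaining commuting operations—is what makes the verification routine.
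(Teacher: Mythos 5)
Your proof is correct and takes essentially the same route as the paper's: the paper likewise first notes that on $\cH_k(\mR^\ell)$ the Casimir reduces to $K_0^{[\ell]}\bigl(K_0^{[\ell]}-1\bigr)$ with eigenvalue $\bigl(k+\sum_{i=1}^{\ell}\nu_i\bigr)\bigl(k-1+\sum_{i=1}^{\ell}\nu_i\bigr)$, and then commutes $C_{[\ell]}$ past the outer factors $\bigl(K_+^{[m]}\bigr)^{j_m}$, $m\geq\ell$, and $CK_m$, $m>\ell$, so that it hits the intermediate harmonic of degree $\sum_{i=1}^{\ell-1}j_i$. The only difference is one of completeness: the paper asserts these commutations ``by construction,'' whereas you verify them explicitly by splitting $K_+^{[m]}$ and writing $CK_m$ as $\exp\bigl(-x_mK_-^{[m-1]}\bigr)$, which fills in the details rather than changing the argument.
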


\begin{proof}
We first observe that if $\ell$ equals the dimension of our space, we have
\begin{align}
\label{eig}
\begin{split}
C_{[\ell]} \cH_k(\mR^\ell) &= K_0^{[\ell]}\left(  K_0^{[\ell]} -1 \right)\cH_k(\mR^\ell)\\
&= \left(k + \sum_{i=1}^{\ell} \nu_i \right)\left(k -1+ \sum_{i=1}^{\ell} \nu_i \right)\cH_k(\mR^\ell).
\end{split}
\end{align}
Subsequently, as $C_{[\ell]}$ commutes by construction with $CK_p$ for $p > \ell$ and with $K_+^{[p]}$ for $p \geq \ell$, we have
\begin{align*}
C_{[\ell]} \psi_{j_1, \ldots, j_{n-1}} &= CK_{n} \left(  \left(K_+^{[n-1]}\right)^{j_{n-1}} CK_{n-1} \left(  \ldots\right. \right.\\
&\quad \ldots \left(K_+^{[\ell]}  \right)^{j_\ell} C_{[\ell]} \left( CK_{\ell} \left( \ldots   CK_{3} \left( \left(K_+^{[2]}\right)^{j_2} CK_{2} \left( x_1^{j_1}   \right) \right) \right) \right).
\end{align*}
The result then follows by application of (\ref{eig}).
\end{proof}

\subsection{Description of the Racah algebra in the new variables}
\label{SecVars}

The Racah algebra $\cR_n$ is generated by the operators $C_B$ of formula (\ref{casbarg}) which are defined using $n$ variables. However, the algebra is only of rank $n-2$. In this section, we show how a realization of $\cR_n$ can be constructed using exactly $n-2$ variables.

Consider the space $\cH_k(\mR^n)$. Using Theorem \ref{CK} we find that an alternative basis for $\cH_k(\mR^n)$ is given by
\begin{align*}
\varphi_{j_1, \ldots, j_{n-2}} & = (x_1 - x_2)^{k- j_1 -j_2 - \ldots -j_{n-2}} (x_3-x_2)^{j_1} (x_4-x_3)^{j_2} \ldots (x_{n}-x_{n-1})^{j_{n-2}} \\
&= (x_1 - x_2)^k u_1^{j_1} u_2^{j_2} \ldots u_{n-2}^{j_{n-2}}
\end{align*}
with $j_\ell$ positive integers with $\sum_{\ell=1}^{n-2} j_\ell \leq k$.
Here we introduced $n-2$ new variables $\{ u_1, u_2, \ldots, u_{n-2} \}$ given by
\[
u_j := \frac{x_{j+2} - x_{j+1}}{x_1 - x_2}, \qquad j \in \{ 1, \ldots, n-2 \}.
\]

The action of $\cR_n$ on $\cH_k(\mR^n)$, through the use of this basis, is hence projected to the space
\[
\Pi_k^{n-2}= \oplus_{\ell=0}^k \cP_\ell(u_1, \ldots, u_{n-2}),
\] 
i.e. the space of polynomials of total degree at most $k$ in $n-2$ variables.

Gauging the operators $C_B$ by $(x_1 - x_2)^k$ to
\begin{equation}
\label{gauge}
\widetilde{C_B} = (x_1 - x_2)^{-k} C_B (x_1 - x_2)^{k}
\end{equation}
then yields a realization of $\cR_n$ on $\Pi_k^{n-2}$. We give this realization explicitly in the following theorem.

\begin{theorem}
The space $\Pi_k^{n-2}$ of all polynomials of degree $k$ in $n-2$ variables carries a realization of the rank $n-2$ Racah algebra $\cR_n$.
This realization is given explicitly by
\[
\widetilde{C_{i}}=\nu_i(\nu_i-1), \qquad i \in [n]
\]
and, for $i, j \in \{3, \ldots, n \}$,
\begin{align*}
\widetilde{C_{12}}&=- \left(k-1-\sum_{\ell=1}^{n-2} u_{\ell}\partial_{u_\ell} \right)  \left(-k-\partial_{u_1}+\sum_{\ell=1}^{n-2} u_{\ell}\partial_{u_\ell} \right) + 2 \nu_2 \left(k-\sum_{\ell=1}^{n-2} u_{\ell}\partial_{u_\ell} \right) \\
& \qquad - 2 \nu_1\left(-k-\partial_{u_1}+\sum_{\ell=1}^{n-2} u_{\ell}\partial_{u_\ell} \right) + (\nu_1+\nu_2)(\nu_1+\nu_2-1)\\
\widetilde{C_{1j}}&=- \left(1 - \sum_{\ell=1}^{j-2} u_{\ell} \right)^2 \left(k-1-\sum_{\ell=1}^{n-2} u_{\ell}\partial_{u_\ell} \right)   \left( \partial_{u_{j-2}}- \partial_{u_{j-1}} \right)\\
& \qquad + 2 \nu_j \left(1 - \sum_{\ell=1}^{j-2} u_{\ell} \right)\left(k-\sum_{\ell=1}^{n-2} u_{\ell}\partial_{u_\ell} \right) 
 - 2 \nu_1  \left(1 - \sum_{\ell=1}^{j-2} u_{\ell} \right)  \left( \partial_{u_{j-2}}- \partial_{u_{j-1}} \right)\\ & \qquad  + (\nu_1+\nu_j)(\nu_1+\nu_j-1)\\
\widetilde{C_{2j}}&= -\left( \sum_{\ell=1}^{j-2} u_{\ell} \right)^2 \left(1-k-\partial_{u_1}+\sum_{\ell=1}^{n-2} u_{\ell}\partial_{u_\ell} \right)   \left( \partial_{u_{j-2}}- \partial_{u_{j-1}} \right)\\
& \qquad + 2 \nu_j \left( \sum_{\ell=1}^{j-2} u_{\ell} \right) \left(k+\partial_{u_1}-\sum_{\ell=1}^{n-2} u_{\ell}\partial_{u_\ell} \right)  + 2 \nu_2 \left( \sum_{\ell=1}^{j-2} u_{\ell} \right)\left( \partial_{u_{j-2}}- \partial_{u_{j-1}} \right)   \\
& \qquad + (\nu_2+\nu_j)(\nu_2+\nu_j-1)\\
\widetilde{C_{ij}}&= - \left( \sum_{\ell=j-1}^{i-2} u_{\ell} \right)^2 \left( \partial_{u_{i-2}}- \partial_{u_{i-1}} \right) \left( \partial_{u_{j-2}}- \partial_{u_{j-1}} \right)\\
 & \qquad+ 2 \nu_j  \left( \sum_{\ell=j-1}^{i-2} u_{\ell} \right)  \left( \partial_{u_{i-2}}- \partial_{u_{i-1}} \right) - 2 \nu_i  \left( \sum_{\ell=j-1}^{i-2} u_{\ell} \right)  \left( \partial_{u_{j-2}}- \partial_{u_{j-1}} \right)\\
& \qquad + (\nu_i+\nu_j)(\nu_i+\nu_j-1)
\end{align*}
where we assume $i >j$ and with $u_{n-1}=0$ whenever it appears.
\end{theorem}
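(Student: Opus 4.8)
The plan is to build the realization from a few structural observations and then carry out one finite change-of-variables computation. First, the gauge $(\ref{gauge})$ is conjugation by $(x_1-x_2)^k$, hence an automorphism of the algebra of (localized) differential operators, so the $\widetilde{C_B}$ automatically obey the same relations as the $C_B$. Second, each $C_B$ preserves $\cH_k(\mR^n)$: its building blocks $K_\pm^B,K_0^B$ involve only the indices in $B$ and so commute with $K_-^{[n]\setminus B}$, while a Casimir commutes with the generators of its own $\su$; thus $[C_B,K_-^{[n]}]=0$, and since $C_B$ is homogeneous of degree $0$ it preserves $\cP_k(\mR^n)$ and hence $\cH_k(\mR^n)$. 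Consequently each $\widetilde{C_B}$ is a well-defined operator on $\Pi_k^{n-2}$ and the collection realizes $\cR_n$.

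I would then reduce everything to one- and two-element subsets. Expanding $(\ref{casbarg})$ with $K_\bullet^B=\sum_{i\in B}K_\bullet^{\{i\}}$ and using that generators attached to distinct indices commute yields
\[
C_B=\sum_{i\in B}C_i+\sum_{\{i,j\}\subset B}\Big(2K_0^{\{i\}}K_0^{\{j\}}-K_+^{\{i\}}K_-^{\{j\}}-K_+^{\{j\}}K_-^{\{i\}}\Big).
\]
Since $C_i=\nu_i(\nu_i-1)$ is scalar (a one-line computation from the single-variable generators), each two-body bracket equals $C_{\{i,j\}}-\nu_i(\nu_i-1)-\nu_j(\nu_j-1)$, whence
\[
C_B=\sum_{\{i,j\}\subset B}C_{\{i,j\}}-(|B|-2)\sum_{i\in B}\nu_i(\nu_i-1).
\]
Thus every $C_B$ lies in the unital algebra generated by the pair Casimirs, and it suffices to establish the scalar value $\widetilde{C_i}=\nu_i(\nu_i-1)$ together with the four displayed families of pair operators.

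A short computation rewrites each pair Casimir as the translation-invariant second-order operator
\[
C_{\{i,j\}}=-(x_i-x_j)^2\partial_{x_i}\partial_{x_j}+2\nu_j(x_i-x_j)\partial_{x_i}-2\nu_i(x_i-x_j)\partial_{x_j}+(\nu_i+\nu_j)(\nu_i+\nu_j-1),
\]
depending only on $x_i-x_j$ and on $\partial_{x_i},\partial_{x_j}$. I would pass to the coordinates $s=x_1-x_2$, the $u_\ell$, and the translation coordinate $w=x_2$. In these variables $K_-^{[n]}=\partial_w$, so harmonics are exactly the $w$-independent homogeneous functions $s^k f(u)$, and one may drop every $\partial_w$. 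The chain rule gives $\partial_{x_m}=\tfrac1s(\partial_{u_{m-2}}-\partial_{u_{m-1}})$ for $m\ge 3$ (with $\partial_{u_{n-1}}:=0$), while on $w$-independent functions $\partial_{x_1}=\partial_s-\tfrac1s E$ and $\partial_{x_2}=-\partial_s-\tfrac1s\partial_{u_1}+\tfrac1s E$, where $E=\sum_{\ell=1}^{n-2}u_\ell\partial_{u_\ell}$; moreover $x_i-x_j=s(\sigma_i-\sigma_j)$ with $\sigma_1=1$, $\sigma_2=0$, $\sigma_m=\sum_{\ell\le m-2}u_\ell$. Conjugating by $s^k$ (which sends $\partial_s\mapsto\partial_s+\tfrac ks$) and then acting on $f(u)$, where $\partial_s f=0$, all powers of $s$ cancel, and the four cases $\{1,2\}$, $\{1,j\}$, $\{2,j\}$ and $\{i,j\}$ with $i,j\ge 3$ reproduce exactly the stated formulas.

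The hard part will be purely the noncommutative bookkeeping attached to the distinguished variables $x_1,x_2$, whose gauged derivatives carry the Euler field $E$, the shift $\partial_{u_1}$ and the weight $k$: acting on $f(u)$ one finds that $s^{-k}\partial_{x_1}s^k$ and $s^{-k}\partial_{x_2}s^k$ reduce to $\tfrac1s(k-E)$ and $\tfrac1s(-k-\partial_{u_1}+E)$, and in the second-order terms one must order carefully the factors $1/s$ against $\partial_s$ and $E$ against $\partial_{u_{j-2}}-\partial_{u_{j-1}}$; this is precisely where the extra $-1$ in the leading coefficient of $\widetilde{C_{12}}$ (and of $\widetilde{C_{1j}}$, $\widetilde{C_{2j}}$) is produced. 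Each such computation is finite and elementary, and the forced cancellation of all powers of $s$—guaranteed a priori because $C_{\{i,j\}}$ preserves homogeneity—serves as a stringent check that the resulting $u$-operators are correct.
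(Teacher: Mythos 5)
Your proposal is correct and follows essentially the same route as the paper's proof: reduce every $C_B$ to the scalar Casimirs $C_i=\nu_i(\nu_i-1)$ and the pair Casimirs $C_{ij}$ via the identity $C_A=\sum_{\{i,j\}\subset A}C_{ij}-(|A|-2)\sum_{i\in A}C_i$, write $C_{ij}$ explicitly as $-(x_i-x_j)^2\partial_{x_i}\partial_{x_j}+2\nu_j(x_i-x_j)\partial_{x_i}-2\nu_i(x_i-x_j)\partial_{x_j}+(\nu_i+\nu_j)(\nu_i+\nu_j-1)$, and then perform the gauge and change of variables. The only differences are presentational: you derive the decomposition identity directly instead of citing \cite{racah}, and you spell out the coordinate framework $(s,u_\ell,w)$, the gauged derivatives, and the well-definedness of the action on $\Pi_k^{n-2}$, which the paper compresses into ``long and tedious computations.''
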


\begin{proof}
In \cite{racah} it is shown that for any $A \subset [n]$
\[
C_A = \sum_{\{i,j\} \subset A} C_{ij} - \left( |A|-2\right) \sum_{i \in A} C_i.
\]
Therefore it suffices to find the expressions for $\widetilde{C_{i}}$ and $\widetilde{C_{ij}}$. 
Observe that from (\ref{casbarg}) follows
\[
C_i = \nu_i(\nu_i-1)
\]
and
\begin{align*}
C_{ij} &= - (x_i-x_j)^2 \partial_{x_i} \partial_{x_j} + 2 \nu_j (x_i - x_j) \partial_{x_i} - 2 \nu_i (x_i - x_j) \partial_{x_j}\\
& \quad + (\nu_i + \nu_j) (\nu_i + \nu_j-1).
\end{align*}
Combining these formulas with (\ref{gauge}), through long and tedious computations, yields the formulas of the theorem.
\end{proof}

\subsection{Explicit formulas for basisfunctions}

In \cite{equit}, the special case $n=3$ was studied and a basis for $\cH_k(\mR^3)$ was given explicitly in terms of hypergeometric functions. We show how this can be rederived from our Theorem \ref{basis}. Indeed, according to Theorem \ref{basis} the basis is given as
\begin{equation}
\label{3deig}
\psi_{j}^k =   CK_{3} \left( \left(K_+^{[2]}\right)^{k-j} CK_{2} \left( x_1^{j}   \right) \right),
\end{equation}
with $j \in \{0, \ldots, k\}$.
We readily observe that, using Theorem \ref{CK}
\[
CK_{2} \left( x_1^{j}   \right) = (x_1 - x_2)^j.
\]
Because of Theorem \ref{CK} we may also write
\[
\psi_{j}^k(x_1, x_2, x_3) = (x_1-x_2)^k \phi^k_{k-j}(u), \qquad u = \frac{x_3-x_2}{x_1-x_2}.
\]
We derive a recursive relation for $\phi_j^k$. From (\ref{3deig}) we have
\[
\psi_j^{k+1} = \left( CK_3 K_+^{[2]} CK_3^{-1}  \right) \psi_j^{k} 
\]
which, after some computations, translates to
\begin{equation}
\label{recursiveL}
\phi^{k+1}_{k+1-j} = \left(u (u-1) \frac{d}{du} + (2 \nu_1+k)-(2 \nu_1+2 \nu_2+2 k)u  \right)\phi_{k-j}^k
\end{equation}
with initial condition $\phi^k_0 = 1$. 

%ADD
%\vspace{3mm}

The hypergeometric series enjoys the following property:
\[
(1-c)   \pFq{2}{1}{a-1,b-1}{c-1}{u}= \left(u(u-1) \frac{d}{du} + 1-c + (a+b-1)u \right)\pFq{2}{1}{a,,b}{c}{u}.
\]
This can be found for instance by combining formulas (2.5.1) and (2.5.2) in \cite{AAR}. Comparing this result with the recursive relation (\ref{recursiveL}) and using the fact that $\phi_{k-j}^k$ is a polynomial of degree $k-j$ then yields, up to a normalization constant,
\[
\phi_{k-j}^k = \pFq{2}{1}{j-k,1-k-j-2 \nu_1 - 2 \nu_2}{1-k-2 \nu_1}{u}.
\]
The same expression was found in \cite{equit} by direct solution of the differential equation.

Note that for $n=3$, Theorem \ref{quantumnrs} reduces to the following spectral equations
\begin{align*}
C_{[2]} \psi_{j}^k &= (j + \nu_1+\nu_2) (j + \nu_1+\nu_2-1) \psi_{j}^k\\
C_{[3]}\psi_{j}^k&= (k + \nu_1+\nu_2+ \nu_3) (j + \nu_1+\nu_2 + \nu_3 -1) \psi_{j}^k.
\end{align*}

Recursive formulas like (\ref{recursiveL}) can be given in general dimension, for the basis of  $\cH_k(\mR^n)$. 
Observe that, using Theorem \ref{basis} and Section \ref{SecVars}
\begin{align*}
\psi_{j_1, \ldots, j_{n-1}} &=  CK_{n} \left(  \left(K_+^{[n-1]}\right)^{j_{n-1}} CK_{n-1} \left(  \ldots \left(K_+^{[3]}\right)^{j_3} CK_{3} \left( \left(K_+^{[2]}\right)^{j_2} CK_{2} \left( x_1^{j_1}   \right) \right) \right) \right)\\
&= (x_1 - x_2)^k \phi_{j_1, \ldots, j_{n-1}}(u_1, \ldots, u_{n-2})
\end{align*}
with $\sum_{\ell=1}^{n-1} j_{\ell} = k$. The initial condition is
\[
 \phi_{j_1, 0, \ldots,0}(u_1, \ldots, u_{n-2}) =1.
\]
Now observe that for $\ell \in \{2, \ldots, n-1 \}$
\[
\psi_{j_1, \ldots, j_{\ell}+1, 0 \ldots, 0} = \left( CK_{\ell+1} K_+^{[\ell]} CK_{\ell+1}^{-1}  \right) \psi_{j_1, \ldots, j_{\ell}, 0 \ldots, 0}
\]
which translates to
\[
\phi_{j_1, \ldots, j_{\ell}+1, 0 \ldots, 0} = L_{u_{1}, \ldots, u_{\ell-1}}^{|j|_{\ell}, \ell} \phi_{j_1, \ldots, j_{\ell}, 0 \ldots, 0}
\]
where
\[
 L_{u_{1}, \ldots, u_{\ell-1}}^{|j|_{\ell}, \ell} =  2 \nu_1+|j|_{\ell}+\sum_{i=1}^{\ell-1}   \left(u_i - 1 + 2 \sum_{p=1}^{i-1}u_p \right) u_i \partial_{u_i} - \sum_{i=1}^{\ell-1} \left(2 |j|_{\ell} + 2 \sum_{p=1}^{i+1} \nu_{p} \right) u_i 
\]
and with $|j|_{\ell} = j_1+ \ldots + j_{\ell}$.
In this way, we can construct $\phi_{j_1, \ldots, j_{n-1}}$ by consecutively adding each index $j_\ell$  for $\ell \in \{2, \ldots, n-1 \}$, and hence also construct $ \psi_{j_1, \ldots, j_{n-1}}$.

%\newpage
\section{The Barut-Girardello model}
\label{BGsec}

Let us now consider another well known realization of $\su$. 
Introduce, for $\nu >0$, the following operators
\begin{align*}
L_0 &= x \partial_x + \nu\\
L_- &= x\partial_x^2 + 2\nu \partial_x\\
L_+ &= x.
\end{align*}
It is easy to verify that they satisfy the $\su$ relations:
\[
[L_0, L_\pm] = \pm L_\pm, \qquad [L_-,L_+] = 2 L_0.
\]
Note that these operators appeared probably first in \cite{BG}, in the context of coherent states.

The connection between the Barut-Girardello model and the Bargmann model can be established using a weighted Laplace transform, see \cite{BVM}. Define
\[
\cL_{\nu}^x (f)(\rho) = \frac{1}{\Gamma{(2\nu)}} \rho^{-2 \nu} \int_0^{+\infty} x^{2\nu-1} f(x) e^{-x/\rho}dx.
\]
Here we have chosen the normalization such that $\cL_{\nu}^x (1)(\rho) =1$.
We can moreover easily compute that
\begin{equation}
\label{laplacemon}
\cL_{\nu}^x (z^n)(\rho) = \frac{\Gamma{(n+ 2\nu)}}{\Gamma{(2\nu)}} \rho^n.
\end{equation}

\begin{remark}
In \cite{H1, H2} Rosengren introduces the inverse of (\ref{laplacemon}). However, he does not give the interpretation as an inverse Laplace transform but only specifies the action on monomials as in (\ref{laplacemon}).
\end{remark}

As a consequence, we have
\begin{equation}
\label{intertwine}
\cL_{\nu}^x ( L_{\pm}f) = K_{\pm}\cL_{\nu}^x (f).
\end{equation}
In other words, the Laplace transform $\cL_{\nu}^x $ intertwines the Barut-Girardello model with the Bargmann one.

\begin{remark}
Note that the algebra  $\mathcal{U}(\mathfrak{su}(1,1))$ has a natural anti-automorphism $\mathfrak{a}$ defined by 
\begin{align*}
\mathfrak{a}(A_0)&=A_0\\
\mathfrak{a}(J_+)&=J_-\\
\mathfrak{a}(J_-)&=J_+,
\end{align*}
using the notations in Section~\ref{prelims}. The Weyl algebra  $\mathbb{C}\langle x,\partial_{x}\rangle$ also has a natural anti-automorphism $\mathfrak{b}$ defined by 
\begin{align*}
\mathfrak{b}(x)&=\partial_x\\
\mathfrak{b}(\partial_x)&=x.
\end{align*}
The composition $\mathfrak{b}\circ\mathfrak{a}$ of these anti-automorphisms provides an isomorphism between 
the Barut-Girardello and Bargmann representations of $\mathfrak{su}(1,1)$.
\end{remark}

Consider now $n$ mutually commuting sets of Barut-Girardello $\su$ generators and combine them by addition. This yields the following operators, defined for any subset $A \subset [n]$
\begin{align*}
L_-^{A} &= \sum_{j \in A} \left(x_j\partial_{x_{j}}^2 + 2 \nu_j \partial_{x_j} \right)\\
L_+^{A} &= \sum_{j \in A}x_j\\
L_0^{A} & = \sum_{j \in A} x_j\partial_{x_{j}} + \sum_{j \in A}\nu_j
\end{align*}
which again generate $\su$. We consider their action on $\cP(\mR^n)= \mR[x_1, \ldots, x_n]$. 
We define the space of Barut-Girardello harmonics as
\[
\cH_k^{BG}(\mR^n) = \cP_k(\mR^n) \cap \ker{L_-^{[n]}}.
\]
If we now put
\[
\cL_{\vec \nu}^{\vec x} = \prod_{j=1}^n \cL_{\nu_j}^{x_j}
\]
then clearly
\[
\cL_{\vec \nu}^{\vec x}  \left(\cH_k^{BG}(\mR^n) \right) = \cH_k(\mR^n).
\]

Surprisingly, we can construct an explicit basis in terms of Jacobi polynomials for $\cH_k^{BG}$, diagonalizing a labelling Abelian subalgebra.  We start with the following two theorems. We omit the proofs, as they are essentially identical to those of the Bargmann case.

\begin{theorem}
\label{FischerBG}
The space $\cP_k(\mR^n)$ decomposes as
\[
\cP_k(\mR^n) = \bigoplus_{j=0}^k \left(L_+^{[n]}\right)^j \cH_{k-j}^{BG}(\mR^n).
\]
\end{theorem}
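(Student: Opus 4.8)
The plan is to mimic the classical Fischer decomposition argument and run an induction on $k$, with the whole statement reducing to the single-step splitting
\[
\cP_k(\mR^n) = \cH_k^{BG}(\mR^n) \oplus L_+^{[n]}\,\cP_{k-1}(\mR^n),
\]
since iterating this identity and feeding in the decomposition of $\cP_{k-1}(\mR^n)$ immediately produces the claimed direct sum. The base case $k=0$ is trivial, as $L_-^{[n]}$ annihilates the constants, so $\cP_0=\cH_0^{BG}$. For the inductive step I would show that the single operator $L_-^{[n]}L_+^{[n]}$ is invertible on $\cP_{k-1}(\mR^n)$; this one fact yields the splitting by purely formal arguments.

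The key computation is a commutator relation which depends only on the abstract $\su$ relations $[L_0^{[n]},L_\pm^{[n]}]=\pm L_\pm^{[n]}$ and $[L_-^{[n]},L_+^{[n]}]=2L_0^{[n]}$ and is therefore literally the same as in the Bargmann case:
\[
[L_-^{[n]}, (L_+^{[n]})^{j}] = j\,(L_+^{[n]})^{j-1}\bigl(2L_0^{[n]}+j-1\bigr).
\]
On $\cP_m(\mR^n)$ the operator $L_0^{[n]}=\sum_j x_j\partial_{x_j}+\sum_i\nu_i$ acts as the scalar $m+N$ with $N:=\sum_{i=1}^n\nu_i$, so for $h\in\cH_m^{BG}(\mR^n)$ one gets $L_-^{[n]}(L_+^{[n]})^{j}h = j\,(2m+2N+j-1)(L_+^{[n]})^{j-1}h$. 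Combining this with the inductive hypothesis $\cP_{k-1}=\bigoplus_{j=0}^{k-1}(L_+^{[n]})^{j}\cH_{k-1-j}^{BG}$ shows that $L_-^{[n]}L_+^{[n]}$ is diagonal on this decomposition, acting on the $j$-th summand by the scalar $(j+1)(2k-2-j+2N)$. The crucial observation is that this scalar never vanishes: for $0\le j\le k-1$ its smallest value is $k-1+2N$, which is strictly positive because $k\ge 1$ and each $\nu_i>0$ forces $N>0$. This positivity is the only real content of the proof; it is exactly where the hypothesis $\nu>0$ enters, and everything else is bookkeeping.

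With $L_-^{[n]}L_+^{[n]}$ invertible on $\cP_{k-1}(\mR^n)$, the splitting follows: invertibility forces $L_+^{[n]}$ to be injective on $\cP_{k-1}$ and $L_-^{[n]}\colon\cP_k\to\cP_{k-1}$ to be surjective, whence $\dim\cH_k^{BG}=\dim\cP_k-\dim\cP_{k-1}$; moreover if $L_+^{[n]}q\in\cH_k^{BG}$ then $L_-^{[n]}L_+^{[n]}q=0$ gives $q=0$, so $\cH_k^{BG}\cap L_+^{[n]}\cP_{k-1}=\{0\}$. Comparing dimensions then closes $\cP_k=\cH_k^{BG}\oplus L_+^{[n]}\cP_{k-1}$, and substituting the inductive decomposition of $\cP_{k-1}$ while using injectivity of $L_+^{[n]}$ to keep the sum direct yields $\cP_k=\bigoplus_{j=0}^{k}(L_+^{[n]})^{j}\cH_{k-j}^{BG}$. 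As the authors note, this is essentially identical to the Bargmann case (Theorem \ref{Fischer}); the main obstacle is purely the non-vanishing of the eigenvalue $(j+1)(2k-2-j+2N)$, after which the decomposition is forced by linear algebra.
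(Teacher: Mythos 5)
Your proof is correct and is precisely the standard $\su$-relations argument that the paper invokes by reference: the paper omits the proof of Theorem \ref{FischerBG}, declaring it essentially identical to the Bargmann case (Theorem \ref{Fischer}), whose own proof is in turn a citation to standard sources where exactly this induction is carried out. Your write-up — the commutator identity $[L_-^{[n]},(L_+^{[n]})^j]=j(L_+^{[n]})^{j-1}(2L_0^{[n]}+j-1)$, the diagonal action of $L_-^{[n]}L_+^{[n]}$ with nonvanishing eigenvalues (the one place where $\nu_j>0$ enters), and the resulting splitting $\cP_k(\mR^n)=\cH_k^{BG}(\mR^n)\oplus L_+^{[n]}\cP_{k-1}(\mR^n)$ — correctly fills in what the paper leaves to the references, so it matches the intended approach.
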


\begin{theorem}
\label{CKBG}
There exists an isomorphism between $\cP_k(\mR^{n-1})$ and $\cH_k^{BG}(\mR^n)$, given by the map
\[
CK_n^{BG}(p) =  \Gamma(2 \nu_n) \sum_{j=0}^k \frac{(- x_n L_-^{[n-1]})^j}{j ! \Gamma(j + 2 \nu_n)} p, \quad p \in \cP_k(\mR^{n-1}).
\]
\end{theorem}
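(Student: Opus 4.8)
The plan is to follow \emph{verbatim} the structure of the proof of Theorem \ref{CK}, replacing the first-order lowering operator $\partial_{x_n}$ by the one-dimensional Barut--Girardello operator in the variable $x_n$. Concretely, I would make the ansatz
\[
CK_n^{BG}\left(p(x_1,\ldots,x_{n-1})\right) = \sum_{j=0}^k x_n^j\, p_j(x_1,\ldots,x_{n-1}),
\]
with $p_0:=p$ and $p_j \in \cP_{k-j}(\mR^{n-1})$ to be determined, and then impose that the result lies in $\cH_k^{BG}(\mR^n)$, i.e. that $L_-^{[n]}$ annihilates it. Splitting $L_-^{[n]} = L_-^{[n-1]} + \left(x_n\partial_{x_n}^2 + 2\nu_n\partial_{x_n}\right)$, the first summand acts only on the coefficients $p_j$, while the second acts only on the powers of $x_n$.

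The one computation that genuinely differs from the Bargmann case is the action of the $x_n$-part on a monomial, namely
\[
\left(x_n\partial_{x_n}^2 + 2\nu_n\partial_{x_n}\right) x_n^j = j\,(j-1+2\nu_n)\, x_n^{j-1},
\]
which carries the extra factor $(j-1+2\nu_n)$ absent in the pure translation operator. Collecting the coefficient of $x_n^j$ in $L_-^{[n]}CK_n^{BG}(p)=0$ then yields the two-term recursion
\[
(j+1)(j+2\nu_n)\,p_{j+1} + L_-^{[n-1]} p_j = 0, \qquad 0 \le j \le k-1,
\]
together with the top condition $L_-^{[n-1]} p_k = 0$. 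Solving downward from $p_0=p$ produces
\[
p_j = \frac{(-1)^j}{j!}\,\frac{1}{(2\nu_n)(2\nu_n+1)\cdots(2\nu_n+j-1)}\,\left(L_-^{[n-1]}\right)^j p,
\]
and recognizing the Pochhammer product as $\Gamma(j+2\nu_n)/\Gamma(2\nu_n)$ gives exactly the stated closed form (using that $x_n$ commutes with $L_-^{[n-1]}$). The top condition $L_-^{[n-1]} p_k=0$ is automatic, since $L_-^{[n-1]}$ lowers the degree by one and hence $\left(L_-^{[n-1]}\right)^{k+1}$ annihilates the degree-$k$ polynomial $p$.

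Finally, for the claim that $CK_n^{BG}$ is an isomorphism onto $\cH_k^{BG}(\mR^n)$, I would argue exactly as in Theorem \ref{CK}. The map is well defined precisely because $\nu_n>0$ forces every denominator $(j+1)(j+2\nu_n)$ to be nonzero. It is injective with explicit left inverse given by restriction $q \mapsto q|_{x_n=0}$, since all terms with $j\ge 1$ vanish at $x_n=0$ while $p_0=p$. Surjectivity follows from the same recursion read in reverse: expanding an arbitrary element of $\cH_k^{BG}(\mR^n)$ in powers of $x_n$, the null-solution condition forces its coefficients to satisfy the recursion above, so it must coincide with $CK_n^{BG}$ applied to its own restriction to $x_n=0$. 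I expect the only real obstacle to be purely bookkeeping: correctly handling the second-order operator on the powers of $x_n$ and identifying the resulting Pochhammer product with the $\Gamma$-function normalization; beyond that the argument is formally identical to the Bargmann one.
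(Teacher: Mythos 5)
Your proof is correct and is precisely the argument the paper intends: the paper omits the proof of Theorem \ref{CKBG}, stating it is essentially identical to that of Theorem \ref{CK}, and your adaptation --- making the same power-series ansatz in $x_n$, computing $\left(x_n\partial_{x_n}^2+2\nu_n\partial_{x_n}\right)x_n^j = j(j-1+2\nu_n)x_n^{j-1}$, solving the resulting recursion $(j+1)(j+2\nu_n)p_{j+1}+L_-^{[n-1]}p_j=0$, and identifying the Pochhammer product with $\Gamma(j+2\nu_n)/\Gamma(2\nu_n)$ --- is exactly that adaptation, with the isomorphism argument (left inverse by restriction to $x_n=0$, surjectivity from the forced recursion, nonvanishing denominators since $\nu_n>0$) matching the paper's treatment of the Bargmann case.
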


Note that the explicit form of the CK map is much more complicated in the Barut-Girardello model than in the Bargmann model. However, it is easier to determine the eigenfunctions explicitly as we will show later.
Nevertheless, we already observe that
\[
CK_n(\cL_{\vec \nu}^{\vec x} (p)) = \cL_{\vec \nu}^{\vec x} \left( CK_n^{BG}(p) \right),
\]
thus linking the $CK$ extensions in both frameworks.

We can now obtain a basis for $\cH_k^{BG}(\mR^{n})$, by combining Theorem \ref{CKBG} with \ref{FischerBG}. We have the following result
\begin{theorem}
\label{basisBG}
A basis for $\cH_k^{BG}(\mR^{n})$ is given by the set of polynomials
\[
\psi_{j_1, \ldots, j_{n-1}}^{BG} =  CK_{n}^{BG} \left(  \left(L_+^{[n-1]}\right)^{j_{n-1}} CK_{n-1}^{BG} \left(  \ldots   \left(L_+^{[2]}\right)^{j_2} CK_{2}^{BG} \left( x_1^{j_1}   \right) \right) \right)
\]
with $\sum_{\ell=1}^{n-1} j_{\ell} = k$.
\end{theorem}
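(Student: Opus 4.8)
The plan is to mirror the Bargmann argument for Theorem \ref{basis}, proceeding by induction on $n$ with the two structural results Theorem \ref{CKBG} and Theorem \ref{FischerBG} as the engine. The base case is $n=2$: here the only admissible multi-index is $j_1=k$, and Theorem \ref{CKBG} identifies $\cH_k^{BG}(\mR^2)$ with $\cP_k(\mR^1)=\mR\, x_1^k$, so that $\{CK_2^{BG}(x_1^k)\}$ is a basis, exactly as the formula predicts.

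For the inductive step I would assume the statement for $n-1$. By Theorem \ref{CKBG} the map $CK_n^{BG}:\cP_k(\mR^{n-1})\to\cH_k^{BG}(\mR^n)$ is a linear isomorphism, so it suffices to exhibit a basis of $\cP_k(\mR^{n-1})$ whose image under $CK_n^{BG}$ is the proposed set. Applying Theorem \ref{FischerBG} in $n-1$ variables gives
\[
\cP_k(\mR^{n-1}) = \bigoplus_{j=0}^k \left(L_+^{[n-1]}\right)^j \cH_{k-j}^{BG}(\mR^{n-1}),
\]
and I would then feed the induction hypothesis into each summand: a basis of $\cH_{k-j}^{BG}(\mR^{n-1})$ is given by the polynomials $\psi^{BG}_{j_1,\ldots,j_{n-2}}$ with $\sum_{\ell=1}^{n-2} j_\ell = k-j$.

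The one point requiring a separate remark is that $\left(L_+^{[n-1]}\right)^j$ carries a basis of $\cH_{k-j}^{BG}(\mR^{n-1})$ onto a basis of the summand $\left(L_+^{[n-1]}\right)^j \cH_{k-j}^{BG}(\mR^{n-1})$; this reduces to the injectivity of $\left(L_+^{[n-1]}\right)^j$ on harmonics. In the Barut-Girardello realization this is immediate, since $L_+^{[n-1]}=\sum_{j=1}^{n-1} x_j$ acts simply as multiplication by the polynomial $x_1+\cdots+x_{n-1}$, and multiplication by a nonzero polynomial is injective on $\cP(\mR^{n-1})$ because the polynomial ring is an integral domain. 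Collecting the pieces, a basis of $\cP_k(\mR^{n-1})$ is $\{(L_+^{[n-1]})^{j_{n-1}}\psi^{BG}_{j_1,\ldots,j_{n-2}}\}$ where $j_{n-1}$ runs from $0$ to $k$ and $\sum_{\ell=1}^{n-2} j_\ell = k-j_{n-1}$, i.e. over all tuples with $\sum_{\ell=1}^{n-1} j_\ell = k$. Applying $CK_n^{BG}$ reproduces exactly the $\psi^{BG}_{j_1,\ldots,j_{n-1}}$ of the statement. As a cross-check, the cardinality $\binom{k+n-2}{n-2}$ of the index set agrees with $\dim \cH_k^{BG}(\mR^n)=\dim \cP_k(\mR^{n-1})$, confirming that nothing is lost in the bookkeeping.

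I expect no genuine obstacle here: the result is formal once Theorems \ref{CKBG} and \ref{FischerBG} are available, and the only feature distinguishing the Barut-Girardello model from the Bargmann one — the more complicated shape of the CK map — plays no role in this argument. The mildest care is needed in aligning the index constraints across the induction and in recording the injectivity of $L_+^{[n-1]}$, but both are routine; this is why the proof can be omitted as being essentially identical to the Bargmann case.
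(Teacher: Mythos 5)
Your proof is correct and takes exactly the route the paper intends: the paper presents Theorem~\ref{basisBG} as an immediate consequence of combining the CK isomorphism (Theorem~\ref{CKBG}) with the Fischer decomposition (Theorem~\ref{FischerBG}), omitting the details, and your induction on $n$ supplies precisely those details. The one step the paper leaves implicit — that $\left(L_+^{[n-1]}\right)^{j}$ carries a basis of $\cH_{k-j}^{BG}(\mR^{n-1})$ to a basis of its image — is handled correctly by your observation that $L_+^{[n-1]}$ is multiplication by the nonzero polynomial $x_1+\cdots+x_{n-1}$ in an integral domain.
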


Contrary to the Bargmann case, it now becomes relatively easy to express the basis functions $\psi_{j_1, \ldots, j_{n-1}}^{BG} $ explicitly in terms of Jacobi polynomials. This goes as follows.

First recall the following definition of the Jacobi polynomial:
\[
P_n^{(\alpha, \beta)}(x) = \sum_{s=0}^n \binom{n + \alpha}{s}\binom{n + \beta}{n-s} \left( \frac{x-1}{2}\right)^{n-s} \left( \frac{x+1}{2}\right)^{s}
\]
which can be rewritten as
\begin{equation}
\label{jacobiexpr}
n! (u + v)^n P_n^{(\alpha, \beta)}( (v-u)/(u+v)) = \sum_{j=0}^n \binom{n }{j} \frac{\Gamma(n + \alpha+1)\Gamma(n + \beta+1)}{\Gamma(n - j + \alpha+1)\Gamma(j+ \beta+1)}  v^{j} (-u)^{n-j}
\end{equation}
where we put $x = (v-u)/(u+v)$.
Next we observe that, using the $\su$ relations and induction, we have the following formulas for all $p \in \{1, \ldots, n \}$ and $\ell \leq j$
\begin{equation}
\label{suind}
(L_-^{[p]})^{\ell}(L_+^{[p]})^{j} \phi_k = c_{j, k , \ell} (L_+^{[p]})^{j-\ell} \phi_k, \qquad \phi_k \in \cH_k^{BG}(\mR^p)
\end{equation}
with
\[
c_{j, k , \ell} = \frac{j !}{(j- \ell)!} \frac{\Gamma(2 |\nu|_p + 2k + j)}{\Gamma(2 |\nu|_p+2k + j - \ell)}
\]
where
\[
|\nu|_p = \sum_{j=1}^p \nu_j.
\]

Formula (\ref{suind}) combined with Theorems \ref{CKBG} and \ref{basisBG} already implies that the basisfunctions in Theorem \ref{basisBG} factor as follows
\begin{equation}
\label{BGexplicit}
\psi_{j_1, \ldots, j_{n-1}}^{BG} = \prod_{k=1}^{n-1} \phi_{j_k} \left(x_{k+1}, L_+^{[k]}\right).
\end{equation}
These factors can be computed explicitly by expanding the $CK$ extensions, and subsequently using (\ref{suind}) and (\ref{jacobiexpr}). The result is
\[
\phi_{j_k} \left(x_{k+1}, L_+^{[k]}\right) = (-1)^{j_k} (j_k)! \frac{\Gamma(2 \nu_{k+1})}{\Gamma(2 \nu_{k+1} + j_k)} \left(L_+^{[k+1]} \right)^{j_k} P^{(\alpha_k, \beta_k)}_{j_k}\left( \frac{x_{k+1} - L_+^{[k]}}{L_+^{[k+1]}}\right)
\]
with
\begin{align*}
\alpha_k & = 2  |\nu|_k -1 + 2 \sum_{\ell =1}^{k-1}j_\ell \\
\beta_k & = 2 \nu_{k+1} -1
\end{align*}
and where we recall that
\[
L_+^{[k]} = \sum_{\ell =1 }^k x_{\ell}
\]
is nothing but a linear polynomial.

Observe now that
\[
\prod_{j=1}^{\ell} \cL_{\nu_j}^{x_j} \left( CK_\ell^{BG} f \right) = CK_\ell  \left( \prod_{j=1}^{\ell} \cL_{\nu_j}^{x_j} \left( f \right) \right),
\]
which, together with (\ref{intertwine}), implies that
\[
\cL_{\vec \nu}^{\vec x}  \left(\psi_{j_1, \ldots, j_{n-1}}^{BG} \right) = \psi_{j_1, \ldots, j_{n-1}}.
\]
This means that the Laplace transform of the explicit Jacobi basis in the BG model yields an explicit basis in the Bargmann model.

\section{Connection with Miller model}

In this section, we establish the connection between the Barut-Girardello model on the one hand and the Miller model (see formula (\ref{eq:Hamiltonian1})) on the other hand. We will do this both for the eigenfunctions and for the differential equations of the model.

First we make the connection explicit for the eigenfunctions of both models. Start from Theorem \ref{basisBG}, where we apply a permutation to the order in which the CK extensions are applied. This yields the following alternative basis:
\[
\widetilde{\psi_{j_1, \ldots, j_{n-1}}^{BG}} =  CK_{1}^{BG} \left(  \left(L_+^{\{2, \ldots, n\}}\right)^{j_{1}} CK_{2}^{BG} \left(  \ldots   \left(L_+^{\{n-1,n\}}\right)^{j_{n-2}} CK_{n-1}^{BG} \left( x_n^{j_{n-1}}   \right) \right) \right)
\]
which factors as
\[
\widetilde{\psi_{j_1, \ldots, j_{n-1}}^{BG}} = \prod_{k=1}^{n-1} \phi_{j_k} \left(x_{k}, L_+^{\{k+1, \ldots, n\}}\right).
\]
If we now restrict to the hyperplane $x_1+ \ldots + x_n =1$, then after some computations $\phi_{j_k}$ is given explicitly by
\[
\phi_{j_k} = c \left(1 - \sum_{i=1}^{k-1} x_i \right)^{j_k} P^{(\alpha_k, \beta_k)}_{j_k}\left( \frac{2 x_k}{1 - \sum_{i=1}^{k-1} x_i} -1\right)
\]
with $c$  a constant and
\begin{align*}
\alpha_k & = -1 +2 \sum_{\ell =k+1}^{n-1}j_\ell+ 2 \sum_{\ell =k+1}^{n}\nu_\ell \\
\beta_k & = 2 \nu_{k} -1.
\end{align*}
This is precisely the form of the eigenfunctions in the Miller model as given in \cite{I, I2}.

Second, consider the equation
\[
\sum_{j =1}^n \left(x_j\partial_{x_{j}}^2 + 2 \nu_j \partial_{x_j} \right) f(x_1, x_2, \ldots, x_n) =0.
\]
Under the change of variables $x_i = z_i^2$ this equation becomes
\[
\frac{1}{4} \sum_{j =1}^n \left(  \partial_{z_j}^2 +  \frac{4 \nu_j - 1}{z_j} \partial_{z_j}  \right) f(z_1, z_2, \ldots, z_n) =0.
\]
The terms in $\partial_{z_i}$ can be gauged away. Indeed, if
\[
H = \sum_{j =1}^n \left( \partial_{z_j}^2 + \frac{4 \nu_j - 1}{z_j} \partial_{z_j}  \right)
\]
then $\widetilde{H} = g H g^{-1}$ with
\[
g = \prod_{j=1}^n z_j^{2 \left(\nu_j-1/4 \right)}
\]
is given by
\[
\widetilde{H} =  \Delta_z - \sum_{j =1}^n \left((2\nu_j-1)^2-\frac{1}{4} \right) \frac{1}{z_j^2}.
\]
Here $\Delta_z$ is the Laplace operator in $(z_1, \ldots, z_n)$. Finally, we restrict $\widetilde{H}$ to the unit sphere $z_1^2 + \ldots + z_n^2=1$. Then we obtain
\[
\widetilde{H}_{\mS^{n-1}} = \sum_{1\leq i < j \leq n} (z_j \partial_{z_i} - z_i \partial_{z_j})^2
 -\sum_{j=1}^{n}\frac{\left((2\nu_j-1)^2-\frac{1}{4} \right)}{z_j^2},
\]
which is indeed the superintegrable Hamiltonian (\ref{eq:Hamiltonian1}) of Miller et al.

\section{Conclusions and outlook}

In this paper we introduced two new models in which the higher rank Racah algebra naturally appears. They are based on the Bargmann and Barut-Girardello realizations of $\su$. We have shown that both models are isomorphic through a modified Laplace transform and have determined eigenfunctions and eigenvalues for a labelling Abelian subalgebra. We have moreover identified the connection with the superintegrable model of Miller et al. in two different ways. Finally, our approach has yielded a differential operator realization of the rank $n$ Racah algebra in precisely $n$ variables. This opens the door for identifying the Racah algebra in new ways and we plan to report on that in the future.
Another important problem is to extend the models introduced here to models of the Bannai-Ito algebra (see \cite{DBAdv}) and to find a generalization of the embedding of \cite{Bas}.

 \section*{Acknowledgements}
The work of HDB is supported by the Research Foundation Flanders (FWO) under Grant EOS 30889451. HDB and PI are grateful for the hospitality extended to them by the Centre de Recherches Math\'ematiques in Montr\'eal, where most of this research was carried out. The research of LV is funded in part by a discovery grant of the Natural Sciences and Engineering Council (NSERC) of Canada.

%%%%%%%%%%%%%%%%%%%%%%%%%%%%%%%%%%%%%%%%%%%%%%%%%%%%%%%%%%%%%%%%%%%%%%%%%%%%%%%%%%%%%%%%%%%%%%%%%%%%%%%%%

\end{document}